\newif\ifplanar
\let\realbfseries=\bfseries
\def\bfseries{\realbfseries\boldmath}
\tikzset{
  custom cloud/.style={draw,cloud,cloud puffs=9.7},
  coin/.style={fill,circle,minimum size=2.5mm,inner sep=0},
  ground/.style={draw,shape=ground IEC,minimum height=4mm,rotate=-90},
  top ground/.style={ground,rotate=180},
  string label/.style={auto=right},
  string/.style={draw,thick,edge node=node [string label]{#1}},
  optional string/.style={string=#1,dashed},
  rope/.style={string=#1,double,double distance=1mm,shorten <=-0.3mm,shorten >=-0.3mm},
  var/.style={draw,circle},
  % pic code based on https://tex.stackexchange.com/a/430486
  wire arrow/.pic={
    \path[tips,every arrow/.try,-{Latex[scale=3,open,fill=white]}] +(0,0) -- +(0.4,0)
    node at (-0.1,0) {\footnotesize#1};
  },
  wire/.style={
    string,
    sloped,allow upside down,
    edge node=pic {wire arrow=#1},
  },
  no wire/.style={string,dotted},
  clause/.style={draw,thick,fill,fill opacity=0.15,text opacity=1,
                 star,star point ratio=2,inner sep=0.5mm},
}%
\newtheorem{theorem}{Theorem}[section]
\newtheorem{lemma}[theorem]{Lemma}
\newtheorem{corollary}{Corollary}[theorem]
\theoremstyle{definition}
\newtheorem{definition}[theorem]{Definition}
 \gdef\xxxmark{%
   \expandafter\ifx\csname @mpargs\endcsname\relax % in minipage?
     \expandafter\ifx\csname @captype\endcsname\relax % in figure/caption?
       \marginpar{xxx}% not in a caption or minipage, can use marginpar
     \else
       xxx % notice trailing space
     \fi
   \else
     xxx % notice trailing space
   \fi}
 \gdef\xxx{\@ifnextchar[\xxx@lab\xxx@nolab}
 \long\gdef\xxx@lab[#1]#2{{\bf [\xxxmark #2 ---{\sc #1}]}}
 \long\gdef\xxx@nolab#1{{\bf [\xxxmark #1]}}
\newcommand{\mathsc}[1]{{\normalfont\textsc{#1}}}
\def\defn#1{\textbf{\textit{\boldmath #1}}}
\newcommandx{\nimstring}[1][1=G]{\mathsc{Nimstring}(#1)\xspace}
\newcommandx{\sac}[1][1=G]{\mathsc{Strings-}\allowbreak\mathsc{And-}\allowbreak \mathsc{Coins}(#1)\xspace}
\newcommandx{\lava}[1][1=G]{\mathsc{Coins-}\allowbreak\mathsc{Are-}\allowbreak \mathsc{Lava}(#1)\xspace}
\newcommandx{\gamesat}[1][1=\mathcal{F}]{\mathsc{Game-SAT}(#1)\xspace}
\title{Strings-and-Coins and Nimstring are PSPACE-complete}
\author{%
  Erik D. Demaine%
    \thanks{MIT Computer Science and Artificial Intelligence Laboratory,
      32 Vassar St., Cambridge, MA 02139, USA,
      \protect\url{{edemaine,diomidov}@mit.edu}}
\and
  Yevhenii Diomidov\footnotemark[1]
}
\date{~}
\begin{document}
\maketitle

\begin{abstract}
  We prove that Strings-and-Coins --- the combinatorial two-player game
  generalizing the dual of Dots-and-Boxes --- is strongly PSPACE-complete
  on multigraphs.
  This result improves the best previous result, NP-hardness,
  argued in \emph{Winning Ways}.
  Our result also applies to the Nimstring variant,
  where the winner is determined by normal play;
  indeed, one step in our reduction is the standard reduction
  (also from \emph{Winning Ways}) from Nimstring to Strings-and-Coins.
  \ifplanar
  For \emph{planar} multigraphs, we establish \emph{weak} PSPACE-completeness
  for Strings-and-Coins and Nimstring when the multiplicity of each edge
  is represented implicitly (and thus can be exponential in the number
  of vertices).
  \fi
  %Our result can be extended to planar multigraphs, and in the dual,
  %a weighted form of Dots-and-Boxes (where each grid edge has a specified
  %number of times it must be drawn before it can form a box edge),
  %but only when the weights are exponential in the number of vertices.
  %Thus we establish only weak PSPACE-completeness for these games,
  %and the complexity of standard Dots-and-Boxes remains open.
\end{abstract}

\begin{center}\it
  In memoriam Elwyn Berlekamp (1940--2019), John H. Conway (1937--2020), \\
  and Richard K. Guy (1916--2020)
\end{center}

\section{Introduction}

Elwyn Berlekamp loved Dots and Boxes.
He wrote an entire book, \emph{The Dots and Boxes Game:
Sophisticated Child's Play} \cite{Berlekamp-2000}
devoted to explaining the mathematical underpinnings of the game,
after they were first revealed in Berlekamp, Conway, and Guy's
classic book \emph{Winning Ways} exploring many such combinatorial games
\cite[ch.~16]{Berlekamp-Conway-Guy-2003-vol3}.
At book signings for both books,%
\footnote{The first author had the honor of playing such a game against
  Elwyn at a book signing on April 13, 2004, at Quantum Books in
  Cambridge, Massachusetts.  Elwyn won.}
and after talks he gave about these topics \cite{Berlekamp-Ross-lecture},
Elwyn routinely played simultaneous exhibitions of Dots and Boxes ---
him against dozens of players, in the style of Chess masters.

As many children will tell you, Dots-and-Boxes is a simple pencil-and-paper
game taking place on an $m \times n$ grid of dots.
Two players alternate drawing edges of the grid, with one special rule:
when a player completes the fourth edge of one or two $1 \times 1$ boxes,
that player gains one or two points, respectively,
and \emph{must} immediately draw another edge (a ``free move''
that is often a blessing and a curse).
The game ends when all grid edges have been drawn;
then the player with the most points wins.
(Draws are possible on boards with an even number of squares.)

An equivalent way to think about Dots-and-Boxes is in the dual
of the grid graph.
Think of each $1 \times 1$ square as a dual vertex or \defn{coin}
worth one point, ``tied down'' by four incident \defn{strings} or
dual edges.  Interior strings connect two coins, while boundary strings
connect a coin to the \defn{ground} (not worth any points).
(Equivalently, boundary edges have only one endpoint.)
Now players alternate cutting (removing) strings, and
when a player \defn{frees} one or two coins (removing the last strings
attached to them), that player gains the corresponding number of points
and must move again.
The game ends when all strings have been cut;
then the player with the most points wins.

\defn{Strings-and-Coins}
\cite[pp.~550--551]{Berlekamp-Conway-Guy-2003-vol3},
\cite[ch.~2]{Berlekamp-2000}
is the generalization of this game to arbitrary graphs,
where vertices represent coins and edges represent strings which
can connect up to two coins (the other endpoints being considered ``ground'').
\defn{Nimstring} \cite[pp.~552--554]{Berlekamp-Conway-Guy-2003-vol3},
\cite[ch.~6]{Berlekamp-2000}
is the closely related game where we modify the win condition to
\defn{normal play}:
the first player unable to move loses.
%the last player to make a move wins.
Nimstring is known to be a special case of Strings-and-Coins,
a fact we use in our results; see Lemma~\ref{lem:nimstring-sac} below.

\paragraph{Related work.}
Dots-and-Boxes, Strings-and-Coins, and Nimstring are surprisingly
intricate games with intricate strategy
\cite{Berlekamp-Conway-Guy-2003-vol3,Berlekamp-2000}.
On the mathematical side, even $1 \times n$ Dots-and-Boxes is largely unsolved
\cite{Guy-Nowakowski-2002,DotsBoxes_GONC4}.

To formalize this difficulty,
\emph{Winning Ways} \cite{Berlekamp-Conway-Guy-2003-vol3}
argued in 1984 that deciding the winner of a Strings-and-Coins position
is \emph{NP-hard} by a reduction from vertex-disjoint cycle packing.
%finding the maximum number of vertex-disjoint cycles.
Around 2000, Eppstein \cite{Eppstein-cgt-hard} pointed out that this reduction
can be adapted to apply to Dots-and-Boxes as well;
%by restricting to maximum-degree-$3$ planar graphs and because exact path
%lengths are insignificant;
see \cite{Demaine-Hearn-2009-survey-both}.

This work left some natural open problems, first explicitly posed in 2001
\cite{Demaine-Hearn-2009-survey-both}: are Dots-and-Boxes, Strings-and-Coins,
and Nimstring NP-complete or do they extend into a harder complexity class?
Being bounded two-player games, all three naturally lie within PSPACE;
are they PSPACE-complete?

%\cite{Demaine-Hearn-2009-survey-both}: ``It remains open whether Dots-and-Boxes or Strings-and-Coins are in NP or PSPACE-complete from an arbitrary configuration''

\paragraph{Results.}
In this paper, we settle two out of three of these 20-year-old open problems
by proving that Strings-and-Coins and Nimstring are PSPACE-complete.
This is the first improvement beyond NP-hardness since the original
\emph{Winning Ways} result from 1984.
Our reductions from Game SAT are relatively simple but subtle.
Along the way, we prove PSPACE-completeness of a new Strings-and-Coins
variant called \defn{Coins-Are-Lava},
where the first person to free a coin loses.

Our constructed game positions rely on \defn{multigraphs}
with multiple copies of some edges/strings,
a feature not present in instances corresponding to Dots-and-Boxes.
Thus our results do not apply to Dots-and-Boxes.
A generalization of Dots-and-Boxes that we might be able to target is
\defn{weighted} Dots-and-Boxes, where each grid edge has a specified
number of times it must be drawn before it is ``complete'' and thus
can form the boundary of a $1 \times 1$ box.
This game corresponds to Strings-and-Coins on planar multigraphs
whose vertices can be embedded at grid vertices
such that edges have unit length.
\ifplanar

As a step toward analyzing the complexity of this problem,
we prove that Strings-and-Coins and Nimstring (and Coins-are-Lava)
are \emph{weakly} PSPACE-complete on \emph{planar} multigraphs,
if the multiplicity of edges/strings are
represented implicitly by input integers, meaning that they can be
exponential in the number of vertices/coins.
As our multigraphs do not have degree $\leq 4$, however, they cannot be
drawn on a square grid, so the complexity of weighted Dots-and-Boxes
remains open.
\else
However, our multigraphs are neither planar nor maximum-degree-$4$,
so they cannot be drawn on a square grid,
so our approach does not resolve the complexity of weighted Dots-and-Boxes.
\fi

In independent work, Buchin, Hagedoorn, Kostitsyna, and van Mulken
\cite{dots-boxes-PSPACE} proved that (unweighted) Dots-and-Boxes is
PSPACE-complete by a reduction from $G_{pos}(\text{POS CNF})$
\cite{Schaefer-1977} (roughly the same problem that we reduce from,
$G_{pos}(\text{POS DNF})$ \cite{Schaefer-1977}).
They construct an instance where, after variable setting, one player's
winning strategy is to select a maximum set of disjoint cycles.
This approach works well for Dots-and-Boxes (and thus Strings-and-Coins)
where the goal is to maximize score,
but not for Nimstring like our approach does.
Thus the two approaches are incomparable.

\section{Nimstring}

We begin with more formal definitions of the games of interest,
and some known lemmas about them:

\begin{definition}[Coin--String Multigraph]
  A \defn{multigraph} $G$ consists of vertices,
  also called \defn{coins}, and edges, also called \defn{strings},
  where each edge $e \in E$ is a set of at most two vertices in~$V$.
  Notably, we allow edges incident to zero or one vertices in~$V$;
  we view the missing endpoints of such an edge as being connected
  to the \defn{ground}.
\end{definition}

\begin{definition}[\unboldmath $\sac$ and $\nimstring$]
  Games $\sac[G]$ and $\nimstring[G]$ are played on a multigraph $G$
  by two players who alternate removing edges and, if a player
  \defn{frees} one or two coins by removing their last incident edges,
  then that player gains the corresponding number of points (one or two)
  and must move again.
  The games end when there are no more strings;
  in $\sac[G]$, the player with the most points wins, while in $\nimstring[G]$,
  the first player unable to move loses.
  %the last player to move wins.
\end{definition}

Next we prove the standard result that Nimstring is equivalent to a special
case of Strings-and-Coins, and thus hardness of the former implies hardness of
the latter:

\begin{lemma} \label{lem:nimstring-sac}
  {\rm \cite[p.~552]{Berlekamp-Conway-Guy-2003-vol3}}
  For every graph $G$, there exists an efficiently computable graph $H$ such that the winner of $\nimstring[G]$ is the same as the winner of $\sac[H]$.
\end{lemma}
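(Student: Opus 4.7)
The plan is to let $H$ be the disjoint union of $G$ with a \emph{long chain} $L$: a cycle of $m$ coins, each of degree~$2$, for $m$ chosen larger than $|V(G)| + 8$. The crucial property of such an $L$ is that, in Strings-and-Coins, the player first forced to move in $L$---call them the ``opener''---cedes the majority of the $m$ coins there to the opponent. Opening the cycle requires removing some string of $L$, which necessarily frees no coin (every coin in $L$ has degree~$2$), so the turn passes. The opponent then applies the classical \emph{double-cross}: eat down the resulting path through successive freeing moves until $4$ coins remain, then remove the middle string as a non-freeing move, leaving two isolated ``dominoes''. The opener is then forced to take both dominoes (each move freeing two coins and demanding yet another), yielding a net deficit of about $m - 8$ coins in $L$ for the opener, which exceeds anything that can be won or lost inside $G$ once $m > |V(G)| + 8$.

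To tie this back to $\nimstring[G]$, let $W$ be the player winning $\nimstring[G]$, and have $W$ adopt the following strategy in $\sac[H]$: whenever $W$ is to move and the opponent's most recent move was in $G$, $W$ plays the winning Nimstring response in~$G$; whenever the opponent's most recent move was in $L$, $W$ plays the double-cross in~$L$. The key observation is that a winning Nimstring strategy always supplies $W$ with a legal move in $G$ after any opponent move in $G$---otherwise $W$ would be the first player unable to move in Nimstring, contradicting the assumption that $W$ wins. Consequently $W$ is never forced to initiate play in $L$. When $G$-play concludes, the Nimstring invariant guarantees it is the opponent's turn with no $G$-move available, so the opponent must open $L$, after which $W$'s double-cross secures the $L$-majority and therefore $\sac[H]$.

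The main obstacle will be a careful treatment of the ``must move again'' rule, which can interleave $G$-play and $L$-play in subtle ways. I will need to verify that (i) the Nimstring winning strategy never leaves $W$ obliged to move with no $G$-move available---automatic, since any such strategy would lose Nimstring, because the last removal of a string in $G$ is always a freeing move---and (ii) the double-cross in $L$ genuinely ends $W$'s turn and routes the opponent back through the two forced domino-takes into $G$, where the Nimstring strategy continues to apply unchanged. Both checks reduce to the standard chain analysis from \emph{Winning Ways}. Since $m > |V(G)| + 8$ precludes ties, the symmetric observation---that the Nimstring loser cannot win $\sac[H]$---follows from Zermelo's theorem, completing the desired equivalence of winners.
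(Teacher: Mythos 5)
Your construction is the same as the paper's: $H$ is $G$ plus a disjoint long cycle, whoever first cuts a cycle string cedes a decisive majority of the coins, and so $\sac[H]$ reduces to being the last player able to move in $G$, i.e., to $\nimstring[G]$. The only difference is cosmetic --- the paper has the responder simply swallow the entire opened cycle ($n > |V(G)|$ coins is already a majority of $|V(H)|$, and the forced extra move afterwards is harmless once the majority is banked) --- so your double-cross and the attendant $m > |V(G)|+8$ bookkeeping are correct but unnecessary.
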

\begin{proof}
  Let $H = G \cup C_{n}$ where $C_{n}$ is a cycle on $n > |V(G)|$ vertices.
  % \xxx{Figure?}
  If a player cuts any string in this cycle, then the opponent can claim $n > \frac{|V(H)|}{2}$ coins in a single turn, winning the game. Therefore the players will try to only cut edges in $G$, and the player who cannot do so loses. This goal is equivalent to just playing $\nimstring[G]$.
\end{proof}

A final known result we will need is about ``loony'' positions in Nimstring:

\begin{lemma} \label{lem:loony}
  {\rm \cite[p.~557]{Berlekamp-Conway-Guy-2003-vol3}}
  If $G$ has a degree-$2$ vertex adjacent to exactly one degree-$1$ vertex, then the first player can always win in $\nimstring[G]$. Such positions are known as \defn{loony} positions.
\end{lemma}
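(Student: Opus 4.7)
My plan is to give the first player a two-case winning strategy, selected according to the Nimstring value of a smaller auxiliary graph. Let $v$ be the degree-$2$ coin, let $u$ be its unique degree-$1$ neighbor, and let $w$ be the far endpoint of $v$'s other string (either a coin of degree $\neq 1$ or the ground). Let $G^\ast$ be the multigraph obtained from $G$ by deleting $u$, $v$, and their two incident strings; the coin $w$, if present, remains in $G^\ast$ with its degree reduced by one.

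First, I would describe the two candidate openings. If $G^\ast$ is an $N$-position in Nimstring, the first player cuts $uv$, freeing $u$, and then cuts $vw$, freeing $v$; each freeing extends the turn, so the resulting position is $G^\ast$ with the first player still on move, which is winning by assumption. If $G^\ast$ is instead a $P$-position, the first player cuts $vw$ first. This move frees no coin (since $v$ still has the string to $u$, and either $w$ does not exist or its degree drops only to $\geq 1$), so the turn passes, leaving the second player on move in $G^\ast$ together with a disjoint \emph{pair} component consisting of the single string $uv$ with both endpoints now of degree~$1$.

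The main obstacle is the auxiliary claim that adjoining a disjoint pair component never changes the Nimstring value of any position. I would prove this by induction on the total number of strings, with empty-graph base case. The key observation is that cutting the pair string frees both of its endpoints and therefore preserves the player on move, so it simply returns the game to whatever is left of the rest; meanwhile every other available cut in $G^\ast \cup \text{pair}$ corresponds to the same cut in $G^\ast$ alone with the same effect on the turn. Matching the two move-recursions and applying the induction hypothesis gives $W(G^\ast \cup \text{pair}) = W(G^\ast)$. Since $G^\ast$ is a $P$-position, so is the combined position, and the second player (on move) loses from it. In either case the first player wins $\nimstring[G]$.
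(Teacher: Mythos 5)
Your proof is correct and follows essentially the same strategy as the paper's: case-split on whether the residual graph $G^\ast$ (the paper's $G'$) is an N- or P-position, cutting both strings in the first case and only the far string in the second. The only difference is that you explicitly prove, by induction, the claim that a disjoint two-coin string does not change the Nimstring value, which the paper simply asserts as ``an extra edge that does not affect the game.''
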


\begin{figure}[ht]
  \begin{subfigure}[t]{0.5\linewidth}
    \centering
    \tikz \draw
    (0,3.5) node[custom cloud,minimum width=30mm,minimum height=22.5mm] {$G'$}
    node[coin] (A) at (0,0) {}
    node[coin] (B) at (0,1) {}
    node[coin] (C) at (0,2) {}
    node[coin] (D1) at (-0.5,3) {}
    node[coin] (D2) at (0.0,3) {}
    node[coin] (D3) at (0.5,3) {}
    (A) edge[string=$a$] (B)
    (B) edge[string=$b$] (C)
    (C) edge[optional string] (D1) edge[string] (D2) edge[optional string] (D3);
    \caption{String $b$ connects to a vertex of degree at least $2$.}
    \label{fig:loony-1}
  \end{subfigure}
  \hfill
  \begin{subfigure}[t]{0.45\linewidth}
    \centering
    \tikz \draw
    (0,3.5) node[custom cloud,minimum width=30mm,minimum height=22.5mm] {$G'$}
    node[coin] (A) at (0,0) {}
    node[coin] (B) at (0,1) {}
    node[top ground] (C) at (0,2) {}
    (A) edge[string=$a$] (B)
    (B) edge[string=$b$] (C);
    \caption{String $b$ connects to the ground.}
    \label{fig:loony-2}
  \end{subfigure}
  \caption{Two loony positions}
  \label{fig:loony}
\end{figure}

\begin{proof}
  Let $a$ be the string between the two coins, $b$ be the other string connected to a degree-2 coin, and $G'$ be the rest of the graph (Figure~\ref{fig:loony}). One of the players has a winning strategy in $\nimstring[G']$.
  \begin{itemize}
    \item If the first player has a winning strategy in $\nimstring[G']$, then we cut strings $a$ and $b$ in this order. We get exactly $G'$ and it is still our turn. By assumption we can win.
    \item If the second player has a winning strategy in $\nimstring[G']$, then we just cut string $b$. We get graph $G'$ (plus an extra edge that does not affect the game), and it is our opponent's turn. By assumption opponent cannot win.
    \qedhere
  \end{itemize}
\end{proof}

\section{Coins-are-Lava}

We introduce a variant game played on strings and coins
that we find easier to analyze, called \defn{Coins-are-Lava}:%
\footnote{%
  For a ``practical'' motivation for this game,
  consider the 1933 Double Eagle U.S. coin: until 2002,
  possession of this coin could result in imprisonment \cite{US-mint}.
}

\begin{definition}[\lava]
  Game $\lava[G]$ is played on a multigraph $G$ by
  two players who alternate removing edges and, if a player frees a coin,
  that player loses.
  Equivalently, players are forbidden from removing an edge that would free
  a coin, and the winner is determined according to normal play.
  %the last edge incident to one or two coins
  %The game is played on a graph $G$ where vertices are coins and edges are strings. Each turn a player cuts one of the strings. If this frees a coin, the player loses immediately.
\end{definition}

Now we show that Coins-are-Lava is a special case of Nimstring.
Thus, its hardness will imply the hardness of both Nimstring and
(by Lemma~\ref{lem:nimstring-sac}) Strings-and-Coins.

\begin{lemma}
  \label{lem:lava-nimstring}
  For every graph $G$, there exists an efficiently computable graph $H$ such that the winner of $\lava[G]$ is the same as the winner of $\nimstring[H]$.
\end{lemma}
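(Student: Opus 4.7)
The plan is to attach to each vertex of $G$ a small ``loony-chain'' gadget engineered so that, by Lemma~\ref{lem:loony}, every move in $\nimstring[H]$ other than a $\lava[G]$-legal $G$-edge cut creates a loony position and is therefore losing. Under optimal play, $\nimstring[H]$ then reduces exactly to $\lava[G]$ and the two games have the same winner.

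Concretely, I would assume without loss of generality that $G$ has no isolated vertices (they do not affect $\lava[G]$), and for each $v \in V(G)$ introduce three new vertices $a_v, b_v, c_v$ and four new edges forming the path $v$--$a_v$--$b_v$--$c_v$--ground (with $c_v$'s last edge being a pendant to ground). In $H$, each of $a_v, b_v, c_v$ then has degree $2$, and each $v \in V(G)$ has $H$-degree $d_G(v) + 1 \ge 2$.

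The heart of the proof is a direct case analysis. Each of the four gadget-edge cuts at $v$ creates a loony position: cutting $v a_v$ makes $b_v$ a degree-$2$ vertex adjacent to the degree-$1$ $a_v$ (its other neighbor $c_v$ being degree $2$); cutting $a_v b_v$ makes $c_v$ a degree-$2$ vertex adjacent to the degree-$1$ $b_v$; cutting $b_v c_v$ makes $a_v$ a degree-$2$ vertex adjacent to the degree-$1$ $b_v$ (and to $v$ of degree $\ge 2$); and cutting the pendant at $c_v$ makes $b_v$ a degree-$2$ vertex adjacent to the degree-$1$ $c_v$. Moreover, any $G$-edge cut $uv$ that would free a vertex $u$ in $\lava[G]$ reduces $u$ to $H$-degree~$1$ adjacent to the intact degree-$2$ gadget vertex $a_u$, creating a loony at $a_u$. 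Conversely, every $\lava[G]$-legal $G$-edge cut leaves both endpoints at $H$-degree $\ge 2$ and does not touch any gadget, so creates no loony. It follows that rational players in $\nimstring[H]$ play precisely the $\lava[G]$-legal moves, the two games coincide, and the winners agree.

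The main technical obstacle is verifying the case analysis tightly---in particular that the identified degree-$2$ vertex in each gadget cut is adjacent to \emph{exactly one} degree-$1$ vertex, not two. The single subtle case is the cut $b_v c_v$, which creates a loony at $a_v$ only if $v$'s current $H$-degree is $\ge 2$. This assumption is justified because the alternative (the current $G$-degree of $v$ being $0$) is itself unreachable under optimal play: it would require a prior freeing move at $v$, which by the second part of the case analysis already creates a loony and hands the game to the opponent. Thus the correspondence holds throughout optimal play, and $\nimstring[H]$ and $\lava[G]$ have the same winner.
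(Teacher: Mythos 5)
Your proof is correct and is essentially the paper's own argument: attach a short chain to ground at every coin (your $v$--$a_v$--$b_v$--$c_v$--ground gadget is exactly the four-string chain drawn in the paper's figure, though the paper's text asks for length $\ge 5$) and verify that every move other than a $\lava[G]$-legal cut leaves a loony position without freeing an $H$-coin, so Lemma~\ref{lem:loony} hands the win to the opponent. Your case analysis is in fact more explicit than the paper's, which simply points at the figure.
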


\begin{figure}[ht]
  \begin{subfigure}[t]{0.16\linewidth}
    \centering
    \tikz \draw
    (0.5,0.5) node[custom cloud,minimum width=20mm,minimum height=20mm] {$G'$}
    node[coin] (A1) at (0,-0) {} node[coin] (A2) at (1,-0) {}
    node[coin] (B1) at (0,-1) {} node[coin] (B2) at (1,-1) {}
    node[coin] (C1) at (0,-2) {} node[coin] (C2) at (1,-2) {}
    node[coin] (D1) at (0,-3) {} node[coin] (D2) at (1,-3) {}
    node[ground] (E1) at (0,-4) {} node[ground] (E2) at (1,-4) {}
    (A1) edge[optional string] (A2)
    (A1) edge[string] (B1) (A2) edge[string] (B2)
    (B1) edge[string] (C1) (B2) edge[string] (C2)
    (C1) edge[string] (D1) (C2) edge[string] (D2)
    (D1) edge[string] (E1) (D2) edge[string] (E2);
    \caption{}
    \label{fig:lava-nimstring-1}
  \end{subfigure}
  \hfill
  \begin{subfigure}[t]{0.16\linewidth}
    \centering
    \tikz \draw
    (0.5,0.5) node[custom cloud,minimum width=20mm,minimum height=20mm] {$G'$}
    node[coin] (A1) at (0,-0) {} node[coin,red] (A2) at (1,-0) {}
    node[coin] (B1) at (0,-1) {} node[coin,red] (B2) at (1,-1) {}
    node[coin] (C1) at (0,-2) {} node[coin] (C2) at (1,-2) {}
    node[coin] (D1) at (0,-3) {} node[coin] (D2) at (1,-3) {}
    node[ground] (E1) at (0,-4) {} node[ground] (E2) at (1,-4) {}
    (A1) edge[string] (B1) (A2) edge[string,red] (B2)
    (B1) edge[string] (C1) (B2) edge[string,red] (C2)
    (C1) edge[string] (D1) (C2) edge[string] (D2)
    (D1) edge[string] (E1) (D2) edge[string] (E2);
    \caption{}
  \end{subfigure}
  \hfill
  \begin{subfigure}[t]{0.16\linewidth}
    \centering
    \tikz \draw
    (0.5,0.5) node[custom cloud,minimum width=20mm,minimum height=20mm] {$G'$}
    node[coin] (A1) at (0,-0) {} node[coin] (A2) at (1,-0) {}
    node[coin] (B1) at (0,-1) {} node[coin,red] (B2) at (1,-1) {}
    node[coin] (C1) at (0,-2) {} node[coin,red] (C2) at (1,-2) {}
    node[coin] (D1) at (0,-3) {} node[coin] (D2) at (1,-3) {}
    node[ground] (E1) at (0,-4) {} node[ground] (E2) at (1,-4) {}
    (A1) edge[optional string] (A2)
    (A1) edge[string] (B1)
    (B1) edge[string] (C1) (B2) edge[string,red] (C2)
    (C1) edge[string] (D1) (C2) edge[string,red] (D2)
    (D1) edge[string] (E1) (D2) edge[string] (E2);
    \caption{}
  \end{subfigure}
  \hfill
  \begin{subfigure}[t]{0.16\linewidth}
    \centering
    \tikz \draw
    (0.5,0.5) node[custom cloud,minimum width=20mm,minimum height=20mm] {$G'$}
    node[coin] (A1) at (0,-0) {} node[coin] (A2) at (1,-0) {}
    node[coin] (B1) at (0,-1) {} node[coin] (B2) at (1,-1) {}
    node[coin] (C1) at (0,-2) {} node[coin,red] (C2) at (1,-2) {}
    node[coin] (D1) at (0,-3) {} node[coin,red] (D2) at (1,-3) {}
    node[ground] (E1) at (0,-4) {} node[ground] (E2) at (1,-4) {}
    (A1) edge[optional string] (A2)
    (A1) edge[string] (B1) (A2) edge[string] (B2)
    (B1) edge[string] (C1)
    (C1) edge[string] (D1) (C2) edge[string,red] (D2)
    (D1) edge[string] (E1) (D2) edge[string,red] (E2);
    \caption{}
  \end{subfigure}
  \hfill
  \begin{subfigure}[t]{0.16\linewidth}
    \centering
    \tikz \draw
    (0.5,0.5) node[custom cloud,minimum width=20mm,minimum height=20mm] {$G'$}
    node[coin] (A1) at (0,-0) {} node[coin] (A2) at (1,-0) {}
    node[coin] (B1) at (0,-1) {} node[coin,red] (B2) at (1,-1) {}
    node[coin] (C1) at (0,-2) {} node[coin,red] (C2) at (1,-2) {}
    node[coin] (D1) at (0,-3) {} node[coin] (D2) at (1,-3) {}
    node[ground] (E1) at (0,-4) {} node[ground] (E2) at (1,-4) {}
    (A1) edge[optional string] (A2)
    (A1) edge[string] (B1) (A2) edge[string,red] (B2)
    (B1) edge[string] (C1) (B2) edge[string,red] (C2)
    (C1) edge[string] (D1)
    (D1) edge[string] (E1) (D2) edge[string] (E2);
    \caption{}
  \end{subfigure}
  \hfill
  \begin{subfigure}[t]{0.16\linewidth}
    \centering
    \tikz \draw
    (0.5,0.5) node[custom cloud,minimum width=20mm,minimum height=20mm] {$G'$}
    node[coin] (A1) at (0,-0) {} node[coin] (A2) at (1,-0) {}
    node[coin] (B1) at (0,-1) {} node[coin] (B2) at (1,-1) {}
    node[coin] (C1) at (0,-2) {} node[coin,red] (C2) at (1,-2) {}
    node[coin] (D1) at (0,-3) {} node[coin,red] (D2) at (1,-3) {}
    node[ground] (E1) at (0,-4) {} node[ground] (E2) at (1,-4) {}
    (A1) edge[optional string] (A2)
    (A1) edge[string] (B1) (A2) edge[string] (B2)
    (B1) edge[string] (C1) (B2) edge[string,red] (C2)
    (C1) edge[string] (D1) (C2) edge[string,red] (D2)
    (D1) edge[string] (E1);
    \caption{}
  \end{subfigure}
  \caption{(a) The graph $H$; (b) freeing a coin in $G$ results in a loony position; (c--f) cutting a string outside $G$ results in a loony position.}
  \label{fig:lava-nimstring}
\end{figure}

\begin{proof}
  Let $H$ be a graph obtained from $G$ by connecting every coin to the ground with a long chain (length $\geq 5$); see Figure~\ref{fig:lava-nimstring-1}.
  
  If a player cuts a string in one of these chains, or cuts all strings in $G$ attached to the same coin, this creates a loony position and ends their turn; see Figure~\ref{fig:lava-nimstring}.
  By Lemma~\ref{lem:loony}, their opponent can then win.

  Therefore the players will try to avoid cutting strings outside $G$ or freeing a coin in $G$.
  The first player to fail to do so loses.
  This goal is equivalent to $\lava[G]$.
\end{proof}

\section{PSPACE-Hardness}

It remains to prove that $\lava$ is PSPACE-complete.
Our reduction is from the following known PSPACE-complete problem.

\begin{definition}[$\gamesat$]
  Given a positive DNF formula $\mathcal F$ (an \textsc{or} of \textsc{and}s
  of variables without negation), $\gamesat$ is the following game played by
  two players, Trudy and Fallon.
  Initially each variable is \defn{unset}.
  In each turn, the player may \defn{set} a variable to \textsc{true} or
  \textsc{false}, or the player may \defn{skip} their turn (do nothing).
  The game ends when all variables are set; then
  Trudy wins if formula $\mathcal F$ is true, while
  Fallon wins if formula $\mathcal F$ is false.
\end{definition}

We allow players to skip turns and to set variables to the ``wrong'' value
(Trudy to false or Fallon to true).  The player with a winning strategy can
always avoid such moves, however, replacing them with dominating ``good'' moves
that do not skip and play the ``right'' value (Trudy to true or Fallon to
false), as such moves never hurt the winning player's final goal.

Schaefer \cite{Schaefer-1977} proved that this game is PSPACE-complete,
under the name $G_{pos}(\text{POS DNF})$.

\begin{theorem}
  \label{thm:lava-pspace}
  Coins-are-Lava is PSPACE-complete.
\end{theorem}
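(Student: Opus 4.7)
The plan is to prove PSPACE-hardness by reduction from $\gamesat$; membership in PSPACE is immediate since Coins-are-Lava is a bounded two-player perfect-information game (the edge count strictly decreases with every move). Given a positive DNF formula $\mathcal{F} = C_1 \vee \cdots \vee C_m$ over variables $x_1,\ldots,x_n$, I would build a coin--string multigraph $G_\mathcal{F}$ so that the first player wins $\lava[G_\mathcal{F}]$ if and only if Trudy wins $\gamesat[\mathcal{F}]$.

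First, for each variable $x_i$ I would install a variable gadget equipped with two distinguished ``commitment'' edges $t_i$ (true) and $f_i$ (false). The gadget is arranged so that initially both edges are safe to cut, but cutting one of them turns the other into lava; thus each variable is permanently committed by exactly one move, corresponding to a $\gamesat$ set-move. For each clause $C_j$ I would attach a clause gadget that remains inert as long as any $x_i \in C_j$ is either uncommitted or committed to false, but ``unlocks'' an extra safe move once every variable in $C_j$ has been committed to true. Finally, I would add a large independent pool of ``skip'' components (for example, disjoint paths or cycles with controlled parity) both to let a player pass on their turn, as $\gamesat$ allows, and to pad the overall parity so that, under canonical play, the last safe move is available to the first player exactly when $\mathcal{F}$ evaluates to true.

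The forward direction is then a direct simulation: Trudy's winning $\gamesat$ strategy translates into a $\lava[G_\mathcal{F}]$ strategy that plays $t_i$ or $f_i$ to commit a variable, cuts a skip edge to pass, and eventually cashes in a clause gadget's unlocked move once the corresponding clause becomes true. For the reverse direction I would show that neither player ever gains by playing off-script: cutting a clause gadget edge before its variables are all set to true, or consuming a skip component in an irregular way, either immediately frees a coin (instant loss) or leaves the opponent in a strictly better parity position than canonical play affords. Once off-script moves are shown to be dominated, canonical strategies are in bijection with $\gamesat$ strategies and the designed parity yields the equivalence.

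The main obstacle will be this off-script analysis. A player can try to scramble the intended correspondence between cuts and variable settings by interleaving moves inside a clause gadget with partial variable commitments, or by fragmenting a skip component into pieces whose combined Nim values shift parity in their favor. Establishing that every such deviation is strictly worse than some canonical alternative---most plausibly by maintaining local degree invariants inside each gadget that force any off-script cut either to free a coin outright or to hand the opponent a provably winning normal-play sub-position---is where the bulk of the technical work lies, and is what will distinguish the PSPACE-hardness argument from the earlier NP-hardness reductions that only needed one-shot cycle-packing choices rather than an alternation-robust gadget design.
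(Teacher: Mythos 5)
Your high-level plan (reduce from $\gamesat$, variable gadgets with a true/false commitment, clause gadgets that yield one extra safe move exactly when satisfied, and a final parity count) matches the shape of the paper's argument, but it is missing the one idea that makes the reduction survive adversarial play, and you yourself flag that spot as ``the main obstacle'' without supplying a mechanism for it. The paper's mechanism is quantitative rather than a local degree invariant: every connection is a \emph{rope} of parallel strings whose width is a power of a large $N$ (widths $N$, $N^2$, $N^3$, $N^4$ across two levels of ``wire'' gadgets), so that (i) disabling a gadget is always an order of magnitude cheaper than activating it or than touching the next level, which forces play into sequential phases, and (ii) which wire survives to be activated is decided by a majority-of-hit-points race --- e.g.\ each variable $x_i$ has $k_i$ wires to real clauses versus $k_i-1$ wires to an artificial singleton clause, and the root coin has $n+m$ wires to real/singleton clauses versus $n+m-1$ to an artificial empty clause, so whichever side holds more total remaining string count wins the race. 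Without some such global counting device, the claim that every off-script cut ``either frees a coin or hands the opponent a worse parity position'' cannot be established by local invariants: nothing local stops a player from chipping one string at a time off a clause gadget or off the other player's intended structure; only the global HP accounting does.

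A second concrete problem is the pool of disjoint ``skip'' components. In Coins-are-Lava both players can consume these, so they do not give a designated player the ability to pass; they merely inject extra free moves whose parity the opponent can spend at a moment of their choosing to flip who faces the critical position. The paper avoids this entirely: skipping in $\gamesat$ is simulated for free because the enormous ropes already supply a vast reservoir of harmless moves, and the simulation maps any cut outside the variable gadgets to a $\gamesat$ skip. Finally, your parity argument needs the stronger statement that the game is \emph{forced} to end in one of exactly two configurations (every rope either fully cut or left with exactly one string, and either all $m$ clause gadgets gone or exactly one surviving), which again rests on the leveled-width machinery. As written, the proposal is an outline of the intended correspondence with the decisive technical content still to be invented.
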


\begin{proof}
  Let $\mathcal{F}$ be a positive DNF formula with $n$ variables, $m$ clauses, and $k_i$ occurrences of each variable $x_i$.
  Without loss of generality, every clause contains at least $2$ variables and every variable appears in at least $1$ clause.
  Fix a sufficiently large number $N \gg m^2n^2$.

  First we define several useful gadgets, which will be connected together via shared coins (merging the output coin of one gadget with the input coin of another gadget).
  Many of these gadgets are parametrized by an integer \emph{level}.
  Intuitively, doing anything to a level-$(\ell+1)$ gadget requires an order of magnitude more time than doing anything to a level-$\ell$ gadget.
  This way we can make sure that players interact with gadgets in the right order.
  However since each level-$\ell$ gadget uses $N^{θ(\ell)}$ strings, we can only use a constant number of levels.

  \begin{figure}[ht]
    \centering
    \begin{minipage}[b]{0.3\linewidth}
      \centering
      \tikz \draw
      node[coin] (A1) at (-1,-1.5) {} node[coin] (B1) at (-1,1.5) {}
      (A1) edge[rope=$5$] (B1)
      node at (0,0) {\huge$:=$}
      node[coin] (A2) at (1.5,-1.5) {} node[coin] (B2) at (1.5,1.5) {}
      (A2) edge[string,bend left=30] (B2) edge[string,bend left=15] (B2) edge[string] (B2) edge[string,bend right=15] (B2) edge[string,bend right=30] (B2);
      \caption{A width-$5$ rope}
      \label{fig:rope}
    \end{minipage}
    \hfill
    \begin{minipage}[b]{0.6\linewidth}
      \centering
      \begin{subfigure}[t]{0.4\linewidth}
        \centering
        \tikz \draw
        node[var] at (-1,0) {$x_i$}
        node at (0,0) {\huge$:=$}
        node[coin,label=above:out] (A) at (1,1) {}
        node[coin] (B) at (1,0) {}
        node[ground] (C) at (1,-1) {}
        (A) edge[string] (B) (B) edge[string] (C);
        \caption{Initial state (unset)}
      \end{subfigure}
      \hfill
      \begin{subfigure}[t]{0.4\linewidth}
        \centering
        \tikz \draw
        node[coin,label=above:out] (A1) at (-1,1) {}
        node[coin] (B1) at (-1,0) {}
        node[ground] (C1) at (-1,-1) {}
        (B1) edge[string] (C1)
        node[coin,label=above:out] (A2) at (1,1) {}
        node[coin] (B2) at (1,0) {}
        node[ground] (C2) at (1,-1) {}
        (A2) edge[string] (B2);
        \caption{Variable set to true and false respectively}
      \end{subfigure}
      \caption{Variable gadget}
      \label{fig:variable}
    \end{minipage}
  \end{figure}

  A \defn{rope} (Figure~\ref{fig:rope}) is a collection of strings that share both endpoints.
  The number of strings in a rope is called its \defn{width}.
  We say that a rope has been \defn{cut} when all of its strings have been cut.
  When the game ends, every rope has either been cut completely, or it has only 1 string remaining.
  (Otherwise, a string in the rope can always be safely cut without freeing any coin.)

  A \defn{variable} gadget (Figure~\ref{fig:variable}) consists of a chain of two strings, where the bottom string is connected to the ground and the top string is connected to an output coin.
  We say that is \defn{set to false} if the bottom string has been cut, \defn{set to true} if the top string has been cut, and \defn{unset} if neither string has been cut.
  A variable implicitly has level $0$.

  \begin{figure}
    \centering
    \begin{minipage}[b]{0.6\linewidth}
      \centering
      \begin{subfigure}[t]{0.4\linewidth}
        \centering
        \tikz \draw
        node[coin] (A1) at (-1,1) {}
        node[coin] (C1) at (-1,-1) {}
        (C1) edge[wire=$\ell\,$] (A1)
        node at (0,0) {\huge$:=$}
        node[coin,label=above:out] (A2) at (1,1) {}
        node[coin] (B2) at (1,0) {}
        node[coin,label=below:in] (C2) at (1,-1) {}
        (C2) edge[rope=$N^{2\ell-1}$] (B2)
        (B2) edge[rope=$N^{2\ell}$] (A2);
        \caption{Initial state}
      \end{subfigure}
      \hfill
      \begin{subfigure}[t]{0.5\linewidth}
        \centering
        \tikz \draw
        node[coin,label=above:out] (A1) at (-1,1) {}
        node[coin] (B1) at (-1,0) {}
        node[coin,label=below:in] (C1) at (-1,-1) {}
        (B1) edge[rope] (A1)
        node[coin,label=above:out] (A2) at (1,1) {}
        node[coin] (B2) at (1,0) {}
        node[coin,label=below:in] (C2) at (1,-1) {}
        (C2) edge[rope] (B2);
        \caption{Disabled and activated wires}
      \end{subfigure}
      \caption{Wire gadget}
      \label{fig:wire}
    \end{minipage}
    \hfill
    \begin{minipage}[b]{0.3\linewidth}
      \centering
      \tikz \draw
      node[clause] at (-1,0) {$\ell$}
      node at (0,0) {\huge$:=$}
      node[top ground,scale=1.5] (A) at (1,1) {}
      node[coin,label=below:in] (B) at (1,-1) {}
      (B) edge[rope=$N^{2\ell-1}$] (A);
      \caption{Clause gadget}
      \label{fig:clause}
    \end{minipage}
  \end{figure}

  A level-$\ell$ \defn{wire} gadget (Figure~\ref{fig:wire}) consists of a chain of two ropes, a width-$N^{2\ell-1}$ bottom rope connected to an input coin and a width-$N^{2\ell}$ top rope connected to an output coin.
  We say that it is \defn{disabled} if the input rope has been cut, \defn{activated} if the top rope has been cut.
  The \defn{HP} (Hit Points) of the wire is the number of strings remaining in the bottom rope.
  Note that activating a wire takes a factor of $N$ more moves than disabling it.
  This means that, if one player is racing to activate a wire and the other is racing to disable it, then the disabler will win the race.
  Intuitively, the only case where a wire will get activated is if disabling the wire would free a coin.

  A level-$\ell$ \defn{clause} gadget (Figure~\ref{fig:clause}) consists of a single width-$N^{2\ell-1}$ rope connected to an input coin and the ground.
  We say that it is \defn{disabled} if the rope has been cut.
  The \defn{HP} of the clause is the number of strings remaining in the rope.
  
  The winner is determined solely by the parity of the number of removed strings.
  We can easily flip this parity, for example by adding an extra ground-to-ground string.
  So without loss of generality, Fallon wins if (but not only if) every variable and wire has one string remaining and all $m$ clauses have no strings.
  Then Trudy wins if every variable and wire has one string remaining, $m-1$ clauses have no strings, and the final clause has one string.
  In fact, we will show that the game has to end in one of these two specific ways.
  
  Let $\mathcal{F}'$ be a new formula with the following clauses:
  \begin{quote}
  \begin{itemize}
    \item all clauses from $\mathcal{F}$, which we call \defn{real} clauses;
    \item for every variable, a \defn{singleton} clause containing just that variable; and
    \item one additional \defn{empty} clause that contains no variables and is always satisfied.
  \end{itemize}
  \end{quote}

  \begin{figure}
    \centering
    \tikz \path
    node[matrix,row sep={3cm,between origins},column sep={2.5cm,between origins}] {
      &&& \node[clause,gray!50!black,label=above:empty] (E) {\footnotesize$3$}; &&& \\
      &&& \node[coin] (R) {}; &&& \\
      \node[clause,Green!50!black,text=black,label=right:$x_1$] (S1) {\footnotesize$3$}; &
      \node[clause,Orange!60!black,text=black,pin={[pin distance=15mm]below:$x_1 \land x_2 \land x_3$}] (C1) {\footnotesize$3$}; &
      \node[clause,Green!50!black,text=black,label=right:$x_2$] (S2) {\footnotesize$3$}; &
      \node[clause,Orange!60!black,text=black,label=right:$x_2 \land x_3$] (C2) {\footnotesize$3$}; &
      \node[clause,Green!50!black,text=black,label=right:$x_3$] (S3) {\footnotesize$3$}; &
      \node[clause,Orange!60!black,text=black,label=right:$x_3 \land x_4$] (C3) {\footnotesize$3$}; &
      \node[clause,Green!50!black,text=black,label=right:$x_4$] (S4) {\footnotesize$3$}; \\
      \node[var] (X1) {$x_1$}; &&
      \node[var] (X2) {$x_2$}; &&
      \node[var] (X3) {$x_3$}; &&
      \node[var] (X4) {$x_4$}; \\
    }
    (R) edge[wire=$2$] (S1) edge[wire=$2$] (S2) edge[wire=$2$] (S3) edge[wire=$2$] (S4)
        edge[wire=$2$] (C1) edge[wire=$2$] (C2) edge[wire=$2$] (C3)
        edge[wire=$2$,bend left=90,looseness=1.5] (E)
        edge[wire=$2$,bend left=60,looseness=1.1] (E)
        edge[wire=$2$,bend left=20] (E)
        edge[wire=$2$,bend right=20] (E)
        edge[wire=$2$,bend right=60,looseness=1.1] (E)
        edge[wire=$2$,bend right=90,looseness=1.5] (E)
    (X1) edge[no wire] (S1) edge[wire=$1$] (C1)
    (X2) edge[wire=$1$] (S2)
         edge[wire=$1$] (C1) edge[wire=$1$] (C2)
         edge[wire=$1$,pos=0.25,bend left=12] (C3)
    (X3) edge[wire=$1$,bend left=20] (S3)
         edge[wire=$1$,bend right=20] (S3)
         edge[wire=$1$,pos=0.25,bend right=12] (C1)
         edge[wire=$1$] (C2) edge[wire=$1$] (C3)
    (X4) edge[no wire] (S4) edge[wire=$1$] (C3)
    ;
    \caption{Graph $G$ for formula $(x_1 \land x_2 \land x_3) \lor (x_2 \land x_3) \lor (x_3 \land x_4)$. Clauses are labeled and colored according to whether they are empty (``empty'' and gray, at the top), singleton (``$x_i$'' and green), or real (``$x_i \land x_j \cdots$'' and orange). Dotted lines indicate that there are supposed to be $k_i-1$ wires there, but $k_i-1=0$.}
    \label{fig:main-reduction}
  \end{figure}

  We construct a multigraph $G$ by connecting the gadgets as follows;
  refer to Figure~\ref{fig:main-reduction}:
  \begin{quote}
  \begin{itemize}
    \item a variable gadget for each variable;
    \item a clause gadget for each clause;
    \item a level-$1$ wire from each variable $x_i$ to each of $k_i$ real clauses that contain that variable;
    \item $k_i - 1$ level-$1$ wires from each variable to the corresponding singleton clause;
    \item a single vertex called the \defn{root coin};
    \item a level-$2$ wire from the root coin to every real clause and every singleton clause; and
    \item $n+m-1$ level-$2$ wires from the root coin to the empty clause.
  \end{itemize}
  \end{quote}

  First we describe how typical gameplay in $G$ should look (without proofs) to give some intuition for why this construction makes sense, and then we prove that it works more formally.
  Typical gameplay divides into four sequential phases:
  \begin{enumerate}
    \item First Trudy and Fallon set variable gadgets to true and false respectively.
    \item Then the players disable all wires from false variables, and disable all but one wire from each true variable (disabling all wires from a true variable would free a coin).
    Then they activate the level-$1$ wires that have not been disabled (one from each true variable).
    Note that almost half the wires from each variable go to singleton clauses.
    If all real clauses are false, then those wires form the majority, and Fallon can ensure that one of them gets activated.
    But if even one real clause is true, the wires to true clauses (real or singleton) now form a majority, and Trudy can ensure that one of them gets activated.
    \item Then the players disable all but one level-$2$ wire and activate the remaining level-$2$ wire (disabling all of them would free a coin).
    Almost half of these wires go to the empty clause.
    If all real clauses are false, then they form a majority, and Fallon can ensure that one of them gets activated.
    But if even one real clause is true, then it together with the empty clause forms a majority, and Trudy can ensure that one of them gets activated.
    \item Finally, the players disable the clause gadgets.
    A clause can be disabled unless all wires pointing at it got activated (in that case, disabling it would free a coin).
    If the formula is not satisfied, then all clauses get disabled and Fallon wins.
    If the formula is satisfied, then exactly one clause remains and Trudy wins.
  \end{enumerate}
  
  We want to show that the winner of $\lava$ is the same as the winner of $\gamesat$.
  We do a case split on the winner of $\gamesat$, and in each case provide a winning $\lava$ strategy for that player.

  If Fallon can win $\gamesat$, then they can win $\lava$ using the following strategy (where numbers match the phases of intended gameplay above):
  \begin{enumerate}
    \item There is a natural mapping $f$ from states of $\lava$ to states of $\gamesat$: a variable $x_i$ in $\gamesat$ is set to true if the corresponding variable gadget is set to true, set to false if the gadget is set to false, and unset if the gadget is unset.
    Every move in $\lava$ maps to a valid move in $\gamesat$,
    where moves outside of variable gadgets map to skip moves.
    Also, if we played $\lava$ for less than $2n$ moves, then we can perform any move that is valid in the corresponding $\gamesat$ state.
    This does not free a coin, because the relevant coin has degree $Ω(N) \gg 2n$.
    So we can transfer the strategy from $\gamesat$ to $\lava$: for every opponent's move in $\lava$, map it to $\gamesat$, find the best response, and map it back to $\lava$.
    We remain in this phase until we have set all variable gadgets to some assignment that does not satisfy $\mathcal{F}$, as guaranteed by the winning strategy in $\gamesat$.
    \item Call a level-$1$ wire from a true variable $x_i$ \defn{good} if it points at a real clause and \defn{bad} if it points at a singleton clause.
    Wires from false variables are \defn{neutral}.
    Each true variable $x_i$ has $k_i$ good wires and $k_i-1$ bad ones.
    For each true variable $x_i$, the total HP of bad wires is still at most $(k_i-1) N^1$ and total HP of all good wires is at least $k_i N^1 - O(n) > (k_i-1) N^1$ (the opponent could cut up to $O(n)$ strings here while we were setting variables).
    \begin{enumerate}
      \item Disable all bad wires, 
      Specifically, if the opponent reduced HP of a good wire connected to some true variable $x_i$, we respond by reducing HP of a bad wire connected to the same $x_i$; if the opponent did something else or $x_i$ has no bad wires left, we reduce HP of a bad wire connected an arbitrary variable $x_j$.
      This maintains the invariant that for each true variable $x_i$, HP of $x_i$'s good wires is higher than HP of $x_i$'s bad wires.
      The opponent cannot activate any bad wires because that would take $Θ(N^2) \gg \sum_i k_i N^1$ moves.
      \item Disable good and neutral level-$1$ wires until there is only one good wire remaining per true variable.
      Once again, the opponent cannot activate these wires because that would take too many moves.
      \item Activate the remaining good wires.
      Opponent cannot disable these wires, because that would free a coin.
      \item We have activated exactly one good wire per true variable.
      There are no activated level-$1$ wires pointing at satisfied clauses, because real clauses are unsatisfied and singleton clauses are bad.
    \end{enumerate}
    \item Call a level-$2$ wire \defn{good} if it points to a real or singleton clause and \defn{bad} if it points to the empty clause.
    The total HP of the $n+m-1$ bad wires is at most $(n + m - 1) N^3$ and the total HP of the $n+m$ good wires is still (after $O(nmN^2)$ moves spent in the first two stages) at least $(n + m) N^3 - O(nm N^2) > (n + m - 1) N^3$.
    \begin{enumerate}
      \item Disable all bad wires. The opponent cannot disable all good wires before we disable the bad ones because good wires have more HP.
      \item Disable all but one good wire.
      Because all disabling and activating steps done so far are for wires of HP $\Theta(N^3)$,
      and activating a level-$2$ wire requires $Θ(N^4)$ moves,
      the opponent cannot afford to activate any of these good wires
      before we disable them.
      \item Activate the last good wire.
      The opponent cannot disable it because that would free the root coin.
    \end{enumerate}
    \item Disable all clause gadgets.
    This will not free a coin, because every clause has at least one disabled wire: real clauses are unsatisfied so there is a false variable whose adjacent wire we disabled in Step 2(b); singleton clauses have bad level-$1$ wires that we disabled in Step 2(a); and the empty clause has a bad level-$2$ wire that we disabled in Step 3(a).
    We win because there are no clause gadgets remaining.
  \end{enumerate}
  
  If Trudy can win $\gamesat$, then they can win $\lava$ using the following strategy (where numbers match the phases of intended gameplay above):
  \begin{enumerate}
    \item Set the variable gadgets to some assignment that satisfies $\mathcal{F}$.
    Let $C \in \mathcal{F}$ be a satisfied clause.
    \item Call a level-$1$ wire from a true variable $x_i \in C$ \defn{good} if it points at a singleton clause or $C$ and \defn{bad} otherwise.
    Wires from variables not in $C$ are \defn{neutral}.
    Each variable $x_i \in C$ has $k_i - 1$ bad wires ($k_i$ to real clauses, but one of them is $C$) and $k_i$ good ones ($k_i - 1$ to the singleton clause and one to $C$).
    Disable all bad wires, then disable all neutral wires and all-but-one good wire per variable in $C$, and then activate the remaining good wires.
    Each activated wire points to a singleton clause or to $C$.
    Then either all of them point to $C$, or at least one of them points to a singleton clause.
    Either way, we have some satisfied real or singleton clause $C'$ with only activated level-$1$ wires.
    \item Call a level-$2$ wire \defn{good} if it points to $C'$ or to the empty clause.
    There are $n + m - 1$ bad wires ($n + m$ to real clauses, but one of them is $C'$) and $n + m$ bad ones ($n + m - 1$ to the empty clause plus one to $C'$).
    Disable all bad wires and activate exactly one good wire.
    Let $C''$ be the clause pointed by the activated wire (either $C'$ or the empty clause).
    \item Disable all clause gadgets other than $C''$.
    This will not free a coin, because every clause other than $C''$ has a disabled level-$2$ wire.
    But $C''$ cannot be disabled, because all of the wires pointing at it have been activated.
    We win because there is exactly one clause gadget remaining.
    \qedhere
  \end{enumerate}
\end{proof}

\begin{corollary}
  \label{cor:nimstring-pspace}
  Nimstring is PSPACE-complete.
\end{corollary}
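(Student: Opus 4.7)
The plan is to obtain this corollary as an immediate consequence of Theorem~\ref{thm:lava-pspace} combined with Lemma~\ref{lem:lava-nimstring}. Membership in PSPACE is the easy direction: a Nimstring position on a multigraph with $e$ edges can be encoded in polynomial space, each move strictly decreases the number of edges (so the game tree has depth at most $e$), and PSPACE is closed under alternating polynomial-depth quantification over polynomial-space configurations. Hence deciding the winner of $\nimstring[H]$ lies in PSPACE (equivalently, in $\mathrm{AP}=\mathrm{PSPACE}$).

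For hardness, the plan is to compose the two reductions already in hand. Theorem~\ref{thm:lava-pspace} shows that $\lava$ is PSPACE-hard, and Lemma~\ref{lem:lava-nimstring} gives a polynomial-time map $G \mapsto H$ such that the winner of $\lava[G]$ equals the winner of $\nimstring[H]$. Thus given any \gamesat instance, we first apply the reduction of Theorem~\ref{thm:lava-pspace} to obtain a Coins-are-Lava instance, then apply Lemma~\ref{lem:lava-nimstring} to obtain a Nimstring instance with the same winner. Since each step is polynomial-time computable and the overall chain preserves the winning player, Nimstring is PSPACE-hard, and together with membership this yields PSPACE-completeness.

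There is no real obstacle here; the only thing worth double-checking is that the reductions truly compose in polynomial time, i.e.\ that the Coins-are-Lava instance produced by the proof of Theorem~\ref{thm:lava-pspace} has size polynomial in the input formula (it does, since $N$ is polynomial in $n$ and $m$ and only a constant number of gadget levels are used), and that the chains of length $\geq 5$ added per coin in Lemma~\ref{lem:lava-nimstring} again blow the size up only polynomially. Both checks are immediate from the explicit constructions, so the corollary follows.
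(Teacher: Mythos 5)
Your proposal is correct and matches the paper's proof, which likewise derives the corollary immediately from Theorem~\ref{thm:lava-pspace} together with Lemma~\ref{lem:lava-nimstring}; you merely spell out the routine PSPACE-membership and polynomial-size checks that the paper leaves implicit.
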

\begin{proof}
  This follows immediately from Theorem~\ref{thm:lava-pspace} and Lemma~\ref{lem:lava-nimstring}.
\end{proof}

\begin{corollary}
  Strings-and-Coins is PSPACE-complete.
\end{corollary}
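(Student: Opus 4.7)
The plan is to obtain this as an immediate consequence of the machinery already developed, so the proof should be a one-paragraph composition rather than a fresh construction. For PSPACE-membership, I would note that Strings-and-Coins is a bounded two-player perfect-information game: each move strictly decreases the number of remaining strings, so play terminates in at most $|E(G)|$ turns, and a position (which edges remain, current scores, whose turn, whether a bonus move is pending) is encodable in polynomial space. The standard alternating minimax recursion then decides the winner in polynomial space.

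For PSPACE-hardness, the plan is to chain the two preceding reductions. Corollary~\ref{cor:nimstring-pspace} establishes that Nimstring is PSPACE-hard, and Lemma~\ref{lem:nimstring-sac} supplies a polynomial-time, winner-preserving reduction from $\nimstring[G]$ to $\sac[H]$ with $H = G \cup C_n$ for any $n > |V(G)|$. Composing these gives a polynomial-time reduction from a PSPACE-hard problem to Strings-and-Coins, which together with membership in PSPACE yields PSPACE-completeness. There is no genuine obstacle here; all of the combinatorial difficulty was already handled upstream in Theorem~\ref{thm:lava-pspace} and in Lemmas~\ref{lem:lava-nimstring} and~\ref{lem:nimstring-sac}, and this corollary merely collects the pieces.
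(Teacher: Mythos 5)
Your proposal is correct and follows exactly the paper's route: the paper's proof of this corollary is the one-line composition of Corollary~\ref{cor:nimstring-pspace} with Lemma~\ref{lem:nimstring-sac}, with PSPACE membership already noted in the introduction for all bounded two-player games of this kind. Your added detail on encoding positions and the minimax recursion is a harmless elaboration of that membership claim.
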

\begin{proof}
  This follows immediately from Corollary~\ref{cor:nimstring-pspace} and Lemma~\ref{lem:nimstring-sac}.
\end{proof}

\section{Open Problems}

We have proved PSPACE-completeness of Strings-and-Coins and Nimstring on
multigraphs, while Buchin et al.\ \cite{dots-boxes-PSPACE} proved
PSPACE-completeness of Dots-and-Boxes, and thus Strings-and-Coins on
grid graphs.
The main open problem is whether Dots-and-Boxes with normal play
instead of scoring, i.e., Nimstring on grid graphs, is also PSPACE-complete.
Toward this goal, we could also aim to prove PSPACE-completeness of Nimstring
on simple graphs (with only one copy of each edge/string) or planar graphs.

\section*{Acknowledgments}

This work was initiated during an MIT class on Algorithmic Lower Bounds:
Fun with Hardness Proofs (6.892, Spring 2019).  We thank the other participants
of the class for providing an inspiring research environment.

% Decrease the space between bibliography items.
\let\realbibitem=\bibitem
\def\bibitem{\par \vspace{-1.2ex}\realbibitem}

\newcommand\Bibkey[3]{} % https://tex.stackexchange.com/a/377970
\bibliography{paper}
\bibliographystyle{alpha}

\end{document}